\newcommand{\Real}{\mathbb{R}}
\newcommand{\todo}[1]{{\sffamily To do:}}
\newtheorem{theorem}{Theorem}
\newtheorem {lemma}{Lemma}
\newenvironment{proof}{{\flushleft \emph{Proof}:}}{\ding{110}}
\title{Existence of periodic solutions for the periodically forced SIR model}
\author{Guy Katriel\\Department of Mathematics, Ort Braude College\\ Karmiel, Israel \footnote{email: katriel@braude.ac.il}}
\date{}
\begin{document}

\maketitle

\begin{abstract}We prove that the seasonally-forced SIR model with a $T$-periodic forcing has a periodic solution with period $T$ whenever the basic reproductive number ${\cal{R}}_0>1$.
The proof uses the Leray-Schauder degree theory.
We also describe some numerical results in which we compute the $T$-periodic solution, where in order to obtain the $T$-periodic solution when the behavior of the system
is subharmonic or chaotic, we use a Galerkin scheme.
\end{abstract}

{\bf{MSC}}:  34V25, 37J45, 92D30.

\section{Introduction}
The periodically forced SIR model
\begin{equation}\label{S}S'=\mu(1-S) -\beta(t) SI,\end{equation}
\begin{equation}\label{I}I'=\beta(t) SI-(\gamma+\mu) I,\end{equation}
\begin{equation}\label{R}R'=\gamma I -\mu R,\end{equation}
and variants of it, are extensively used to model seasonally recurrent diseases \cite{aron,bolker,dietz,earn,grassly,greenman,jodar,katriel,keeling,keeling1,kuznetsov,london,olinky}.
Here $S,I,R$ are the fractions of the population which are susceptible, infective, and recovered, $\mu$ denotes the birth and death rate, $\gamma$ the recovery rate, and $\beta(t)$, which we assume is a positive continuous $T$-periodic function, is the seasonally-dependent transmission rate (so that $T$ is the yearly period).

When simulating this model numerically (see section \ref{numerical}), it is observed that:

(i) If ${\cal{R}}_0\leq 1$, where
$${\cal{R}}_0=\frac{\bar{\beta}}{\gamma+\mu},$$
$$\bar{\beta}=\frac{1}{T}\int_0^T \beta(t)dt,$$
then all solutions tend to the disease free equilibrium $S_0=1, I_0=0, R_0=0$. This fact can be rigorously proved, see \cite{ma}.

(ii) If ${\cal{R}}_0> 1$, then depending on the values of the parameters, one observes convergence to $T$-periodic orbits, or to $nT$-periodic orbits with $n>1$
(subharmonics), or chaotic behavior.

A fundamental question, that is addressed here, is the {\it{existence}} of a $T$-periodic solution of the system. We demand, of course that the components $S(t),I(t),R(t)$
of this solution will be positive.  Obviously when ${\cal{R}}_0\leq 1$, a positive periodic
solution cannot exist, because such a solution would not converge to the disease-free equilibrium. We will prove, however, that
\begin{theorem}\label{main} Whenever ${\cal{R}}_0>1$,
there exists at least one $T$-periodic solution $(S(t),I(t),R(t))$ of (\ref{S})-(\ref{R}), all of whose components are positive.
\end{theorem}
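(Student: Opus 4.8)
The plan is to apply Leray--Schauder degree, in the guise of Mawhin's continuation theorem for $T$-periodic solutions, after two reductions. First, equation (\ref{R}) is linear in $R$ and decoupled, so it suffices to produce a positive $T$-periodic pair $(S,I)$ solving (\ref{S})--(\ref{I}): given such a pair with $I>0$, the unique $T$-periodic solution of $R'=\gamma I-\mu R$ is $R(t)=\gamma\int_{-\infty}^{t}e^{-\mu(t-s)}I(s)\,ds$, which is automatically positive. Second --- and this is the device that eliminates the trivial disease-free solution --- I would substitute $I=e^{y}$, turning (\ref{S})--(\ref{I}) into
\begin{equation}\label{eq:Sy}
S'=\mu(1-S)-\beta(t)\,S\,e^{y},\qquad y'=\beta(t)\,S-(\gamma+\mu),
\end{equation}
a system on $C_{T}\times C_{T}$ (continuous $T$-periodic functions) for which \emph{any} $T$-periodic solution with $S>0$ yields a positive $T$-periodic solution $(S,e^{y})$ of the original model. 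The equilibrium $(1,0)$ has been sent to $y\equiv-\infty$ and no longer appears.

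Next I would prove \emph{a priori bounds} for the $T$-periodic solutions of the homotopy family $S'=\lambda[\mu(1-S)-\beta S e^{y}]$, $y'=\lambda[\beta S-(\gamma+\mu)]$, $\lambda\in(0,1]$. Putting $W=S+e^{y}$, a direct computation gives $W'=\lambda(\mu-\mu W-\gamma e^{y})$, so evaluating at a maximum point of $W$ yields $W\le1$, hence $e^{y}\le1$ (i.e. $y\le0$) and $S\le1$ everywhere. At a minimum point of $S$ the $S$-equation gives $S=\mu/(\mu+\beta e^{y})\ge\mu/(\mu+\|\beta\|_{\infty})>0$. Finally $|y'|\le\|\beta\|_{\infty}+\gamma+\mu=:M$ while $\int_{0}^{T}y'\,dt=0$, so the oscillation of $y$ is at most $MT/2$ and therefore $-MT/2\le y\le0$. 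Hence every such solution lies in a fixed bounded box $\mathcal{B}\subset C_{T}\times C_{T}$.

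Then I would run the continuation theorem with $L=\frac{d}{dt}$ on $C_{T}\times C_{T}$ (so $\ker L=\mathrm{coker}\,L$ consists of the constants and the relevant projections are the time-averages) and $N$ the right-hand side of (\ref{eq:Sy}), on an open bounded set $\Omega$ containing $\mathcal{B}$. The estimates above provide the required bound for $\lambda\in(0,1)$. On $\ker L$ the averaged map is $\Psi(c_{1},c_{2})=\big(\mu(1-c_{1})-\bar\beta\,c_{1}e^{c_{2}},\ \bar\beta\,c_{1}-(\gamma+\mu)\big)$, whose only zero is $c_{1}=1/{\cal{R}}_0$, $e^{c_{2}}=\mu(1-1/{\cal{R}}_0)/(\gamma+\mu)$ --- a genuine real point \emph{exactly because} ${\cal{R}}_0>1$ --- and we choose $\Omega$ so that it lies in $\Omega\cap\ker L$; a short computation gives $\det D\Psi=\bar\beta^{2}c_{1}e^{c_{2}}>0$ there, so the Brouwer degree of $\Psi$ equals $+1\ne0$. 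The continuation theorem then produces a $T$-periodic solution $(S,y)$ of (\ref{eq:Sy}); the a priori bound gives $S>0$, so $I=e^{y}>0$, and recovering $R>0$ as above finishes the proof.

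The step I expect to be the crux is the a priori lower bound on $y$, i.e. ruling out that the solutions of the homotopy family run off toward the disease-free state $I\to0$; the argument above dispatches it cheaply using a uniform Lipschitz bound for $y$ together with the mean-zero identity $\int_{0}^{T}y'=0$, but only after the upper bounds $S\le1$, $e^{y}\le1$ are in hand, so the order of the estimates is what requires care. Verifying compactness of the associated fixed-point operator (via the compact inverse of $\frac{d}{dt}+a$ on $C_{T}$) and that $\Omega$ carries no solutions on its boundary for any $\lambda\in[0,1]$ is then routine.
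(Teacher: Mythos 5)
Your route is genuinely different from the paper's: instead of working in the $(S,I)$ variables on a domain that must be carefully carved so that the disease-free solution $(1,0)$ does not sit on its boundary (the paper's set $U=\{(S,I)\in D:\ \min S<\delta\}$ with $\delta\in({\cal R}_0^{-1},1)$), you substitute $I=e^{y}$ so that the disease-free state is pushed to $y=-\infty$ and disappears from the function space, and you run Mawhin's continuation with the homotopy $Lx=\lambda Nx$ rather than the paper's homotopy in the coefficient $\beta_\lambda=\bar\beta+\lambda\beta_0$. This is a legitimate and arguably cleaner architecture, and your computations of $W'=\lambda(\mu-\mu W-\gamma e^{y})$, of the zero of the averaged map, and of $\det D\Psi=\bar\beta^{2}c_1e^{c_2}>0$ are all correct.

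However, there is a genuine gap exactly at the step you yourself identify as the crux: the a priori \emph{lower} bound on $y$. From $e^{y}\le 1$ you get $\max y\le 0$, and the mean-zero/Lipschitz argument gives $\max y-\min y\le MT/2$; but concluding $\min y\ge -MT/2$ from these two facts requires $\max y\ge 0$, which you do not have (and which is false for the actual solutions, since $I<1$). As stated, nothing prevents a family of $T$-periodic solutions of the homotopy from sliding off to $y\to-\infty$, i.e.\ toward the disease-free state, and then no bounded $\Omega$ is admissible. Note also that in your estimates the hypothesis ${\cal R}_0>1$ is used only to guarantee that the averaged map $\Psi$ has a zero; it must also enter the a priori bounds, since for ${\cal R}_0\le 1$ no positive periodic solution exists at all. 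The fix is short and is essentially the paper's Lemma 2: integrating $y'=\lambda[\beta S-(\gamma+\mu)]$ over a period gives $\frac1T\int_0^T\beta(t)S(t)\,dt=\gamma+\mu$, hence $\min S\le(\gamma+\mu)/\bar\beta={\cal R}_0^{-1}$; evaluating the $S$-equation at the minimum point of $S$, where $S=\mu/(\mu+\beta e^{y})$, then forces $\beta(t_0)e^{y(t_0)}\ge\mu({\cal R}_0-1)$, so $\max y\ge\log\bigl(\mu({\cal R}_0-1)/\norm{\beta}_\infty\bigr)$, and only now does the oscillation bound yield a uniform lower bound for $y$. With this inserted, your proof goes through.
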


Thus, when $T$-periodic behavior is not observed in simulations, this is not due to the fact that such a
solution does not exist, but rather to the fact that all $T$-period solutions are unstable.

Despite the fact that the existence of a $T$-periodic solution of the $T$-periodically forced SIR model is a fundamental issue, the only paper
in the literature of which we are aware to have dealt with this question is the recent paper of J\'odar, Villanueva and Arenas \cite{jodar}. They treated a more
general system then we do here (including loss of immunity and allowing other coefficients besides $\beta(t)$ to be $T$-periodic). Restricting their
existence result to the case of the SIR model (\ref{S})-(\ref{R}), they proved, using Mawhin's continuation theorem, that a $T$-periodic solution exists whenever the condition
\begin{equation}\label{jodar}\min_{t\in\Real}\beta(t)>\gamma+\mu\end{equation}
holds.
Note that the condition (\ref{jodar}) implies that $\bar{\beta}>\gamma+\mu$, that is ${\cal{R}}_0>1$, but it is a stronger condition. Theorem \ref{main}
uses only the condition ${\cal{R}}_0>1$, so that together with the fact noted above, that when ${\cal{R}}_0\leq 1$ a $T$-periodic solution does {\it{not}}
exist, we have that ${\cal{R}}_0>1$ is a {\it{necessary and sufficient}} condition for the existence of a $T$-periodic solution with positive components.

Our technique for proving Theorem \ref{main} relies on nonlinear functional analysis, for which we refer to the
textbooks \cite{brown,zeidler}. Reformulating the problem as one of solving an equation
in an infinite dimensional space of periodic functions, we define
a homotopy between the periodically forced problem and the autonomous problem in which $\beta(t)$ is replaced by the mean $\bar{\beta}$. The
autonomous problem has an endemic equilibrium, which is a trivial periodic solution.
We then employ Leray-Schauder degree theory to continue this solution along the homotopy. The challenge here lies in the fact that
there we always have a trivial periodic solution, given by the disease-free equilibrium, which lies on the boundary of the relevant domain $D$
in the functional space, which requires us to construct a smaller domain $U\subset D$ excluding the trivial solution, and to show that the conditions
for applying the Leray-Schauder theory hold for the domain $U$.

We note that our proof of Theorem \ref{main} is easily extended to give the same result for the SIRS model, which includes loss of immunity
\begin{equation}\label{S5}S'=\alpha S+\mu(1-S) -\beta(t) SI,\end{equation}
\begin{equation}\label{I5}I'=\beta(t) SI-(\gamma+\mu) I,\end{equation}
\begin{equation}\label{R5}R'=\gamma I -(\mu+\alpha) R.\end{equation}
We present the proof for the SIR model ($\alpha=0$) in order to avoid notational clutter.

In section \ref{proof} we prove Theorem \ref{main}. In section \ref{numerical} we discuss how to obtain the $T$-periodic solution numerically, which cannot
be done by direct numerical simulation when it is unstable, and present some results obtained by using the Galerkin method.
Finally, in section \ref{discussion} we mention some other works providing rigorous mathematical results on forced SIR models, beyond
numerical simulation.

\section{Proof of the Theorem}
\label{proof}
Since $S(t),I(t),R(t)$ are fractions of the population we have $S(t)+I(t)+R(t)=1$ for all $t$ - note that by adding the equations (\ref{S})-(\ref{R}) we have
$(S(t)+I(t)+R(t))'=0$. Since $R$ does not appear in (\ref{S}),(\ref{I}), the equation (\ref{R}) can be ignored, and it suffices to proved the existence of a periodic solution of (\ref{S}),(\ref{I}) satisfying
\begin{equation}\label{pos}S(t)>0,\;I(t)>0,\;S(t)+I(t)<1,\;\;\forall t,\end{equation}
where the third condition is equivalent to $R(t)=1-I(t)-S(t)>0$.

We decompose $\beta(t)$ as
$$\beta(t)=\bar{\beta}+\beta_0(t),\;\;\int_0^T \beta_0(t)dt=0.$$
Setting, for $\lambda\in [0,1]$,
\begin{equation}\label{deco}\beta_{\lambda}(t)=\bar{\beta}+\lambda\beta_0(t),\end{equation}
we consider the system
\begin{equation}\label{S1}S'=\mu(1-S) -\beta_{\lambda}(t) SI\end{equation}
\begin{equation}\label{I1}I'=\beta_{\lambda}(t) SI-(\gamma+\mu) I,\end{equation}
which is homotopy between an unforced system with $\beta_0(t)=\bar{\beta}$ and our system (\ref{S}),(\ref{I}), which corresponds to $\lambda=1$.
For $\lambda=0$, (\ref{S1}),(\ref{I1}) has exactly two periodic solutions, which are constant, given by
\begin{equation}\label{I00}S_0=1,\;\;I_0=0,\;\;\end{equation}
\begin{equation}\label{I0}S^*=\frac{\gamma+\mu}{\bar{\beta}},\;\;I^*=\mu\Big(\frac{1}{\gamma+\mu}-\frac{1}{\bar{\beta}}\Big).
\end{equation}
We note that $(S_0,I_0)$ (the disease-free equilibrium) is in fact a (trivial) periodic solution of (\ref{S1})-(\ref{I1}) for {\it{all}} $\lambda$.
Our aim is to continue the solution $(S^*,I^*)$ with respect to $\lambda$ in order to prove the existence of a periodic solution
for $\lambda=1$. To this end, we now reformulate the problem in a functional-analytic setting, which will enable us to employ
degree theory.

We rewrite (\ref{S1}),(\ref{I1}) as
\begin{equation}\label{S2}S'+\mu S=\mu -\beta_{\lambda}(t) SI\end{equation}
\begin{equation}\label{I2}I'+(\gamma+\mu) I=\beta_{\lambda}(t) SI\end{equation}

Let $X,Y$ be the Banach spaces
$$X=\{ (S,I)\;|\; S,I \in C^1(\Real),\; S(t+T)=S(t), I(t+T)=I(t)\}$$
$$Y=\{ (S,I)\;|\; S,I \in C^0(\Real),\; S(t+T)=S(t), I(t+T)=I(t)\}$$

Define the linear operator $L:X\rightarrow Y$ by
$$L(S,I)=(S'+\mu S,I'+(\gamma+\mu) I)$$
and the nonlinear operator  $N:Y\rightarrow Y$
$$N_\lambda(S,I)=(\mu -\beta_{\lambda}(t) SI,\beta_{\lambda}(t) SI).$$
Then the periodic problem for (\ref{S2})-(\ref{I2}) can be rewritten as
\begin{equation}\label{re}L(S,I)=N_{\lambda}(S,I).\end{equation}
It is easy to check that $L$ is invertible, that is the equations $S'+\mu S=f$ and $I'+(\gamma+\mu)I=g$ have unique $C^1$ $T$-periodic solutions
$S,I$ for any $f,g\in Y$, and the mapping $L^{-1}:Y\rightarrow X$ given by $L^{-1}(f,g) =(S,I)$ is bounded.
We can thus rewrite (\ref{re}) as
\begin{equation}\label{re1}F_{\lambda}(S,I)=0.\end{equation}
where $F_{\lambda}:Y\rightarrow X$ is given by
\begin{equation}\label{fr}F_\lambda(S,I)=(S,I)-L^{-1}\circ N_\lambda(S,I).\end{equation}
Since $L^{-1}:Y\rightarrow X$ is bounded, and since, by the Arzela-Ascoli Theorem, $X$ is compactly embedded in $Y$, we can consider $L^{-1}$ as a compact operator
from $Y$ to itself, and since $N:Y\rightarrow Y$ is continuous, $L^{-1}\circ N_\lambda$ is compact as an operator from $Y$ to itself. We therefore
consider (\ref{re1}) in the space $Y$, and we note that any solution in $Y$ will in fact be in $X$, hence a classical solution of (\ref{S2}),(\ref{I2}).
Since $F_{\lambda}$ is a compact perturbation of the identity on $Y$, Leray-Schauder theory is applicable.
Since we want our solution to satisfy (\ref{pos}), we want to solve (\ref{re1}) in the subset $D\subset Y$ given by
$$D=\{ (S,I)\in Y\;|\; S(t)>0,\; I(t)>0,\; S(t)+I(t)<1\}.$$
Note that for $\lambda=0$ the solution $(S^*,I^*)$ given by (\ref{I0}) lies in $D$. Our aim is to continue this solution in $\lambda$ up to
$\lambda=1$.

We recall that the Leray-Schauder degree theory (see e.g. \cite{brown,zeidler}) implies that, given a bounded open set $U\subset Y$, the existence of a solution $(S,I)$ of
(\ref{re1}) for all $\lambda\in [0,1]$ will be assured if the following conditions hold:

(I) $(S^*,I^*)\in U$,

(II) $deg(F_0,(S^*,I^*))\neq 0$,

(III) $F_\lambda(S,I)\neq 0$ for all $(S,I)\in \partial U$, $\lambda\in [0,1]$.

The most obvious choice for $U$ would be $U=D$. However, this will not do, since $(S_0,I_0)$ (given by (\ref{I00})) satisfies $(S_0,I_0)\in \partial D$ and $F_\lambda(S_0,I_0)=0$,
so that (III) does not hold.
To satisfy (III) we will need to choose $U$ so as to exclude $(S_0,I_0)$ from its boundary.
We take $U$ to be the open subset of $D$ given by
\begin{equation}\label{U}U=\{ (S,I)\in D\;|\; \min_{t\in\Real}S(t)<\delta\},\end{equation}
where $\delta\in (0,1)$ is fixed. Note that $(S_0,I_0)\not\in \bar{U}$.
We will show below that $U$ satisfies (I)-(III) if $\delta$ is chosen so that $\delta\in ({\cal{R}}_0^{-1},1)$.

We first show that $(S_0,I_0)$ is the {\it{only}} solution of (\ref{re1}) on $\partial D$.

\begin{lemma}\label{only} If $(S,I)\in\partial D$ is a solution of (\ref{re1}) for some $\lambda\in [0,1]$, then $(S,I)=(S_0,I_0)$, as given by (\ref{I00}).
\end{lemma}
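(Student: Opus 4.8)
The plan is to exploit the structure of the SIR equations on the boundary $\partial D$, using the positivity/invariance properties of the flow together with the constraint $S+I+R=1$. A point $(S,I)\in\partial D$ is a $T$-periodic solution of \eqref{S1}-\eqref{I1} for which at least one of the three inequalities defining $D$ degenerates somewhere, so the boundary decomposes into three (not necessarily disjoint) pieces: (a) $\min_t S(t)=0$, (b) $\min_t I(t)=0$, and (c) $\max_t (S(t)+I(t))=1$, equivalently $\min_t R(t)=0$ where $R=1-S-I$. I would treat each piece in turn and show that the only way a periodic solution can touch one of these faces is to be the constant solution $(S_0,I_0)=(1,0)$.

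First I would dispose of cases (b) and (c) by a uniqueness-of-solutions argument for the relevant scalar linear ODE. If $I(t_0)=0$ for some $t_0$, then along the periodic solution $I$ solves the linear equation $I'=(\beta_\lambda(t)S(t)-(\gamma+\mu))I$ with a continuous coefficient; by uniqueness for linear ODEs (the zero function is a solution through $(t_0,0)$), $I\equiv 0$. Substituting $I\equiv 0$ into \eqref{S1} forces $S'=\mu(1-S)$, whose only $T$-periodic solution is $S\equiv 1$; hence $(S,I)=(S_0,I_0)$. Similarly, if $R(t_0)=0$, then $R$ solves the linear equation $R'=\gamma I-\mu R$; but here I cannot immediately conclude $R\equiv 0$ since the forcing term $\gamma I$ need not vanish — instead I note that at a minimum of $R$ we have $R'(t_0)=0$, so $\gamma I(t_0)=\mu R(t_0)=0$, giving $I(t_0)=0$, which returns us to case (b). (Alternatively one observes $R\geq 0$ is preserved because $R'\geq -\mu R$ whenever $R\le$ its periodic value and $I\ge 0$; the cleanest route is the minimum-point argument just given.)

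It remains to handle case (a): $S(t_0)=0$ at some $t_0$. Here the forcing term $\mu>0$ in \eqref{S1} is the crucial feature. At a minimum of $S$ we have $S'(t_0)=0$, so from \eqref{S1}, $0 = S'(t_0) = \mu(1-S(t_0)) - \beta_\lambda(t_0)S(t_0)I(t_0) = \mu(1-0) - 0 = \mu > 0$, a contradiction. Hence case (a) is vacuous: no periodic solution has $\min_t S(t)=0$. Combining, every periodic solution on $\partial D$ must equal $(S_0,I_0)$.

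The argument is essentially routine once one decomposes the boundary correctly; the only mild subtlety, and what I would be most careful about, is case (c)/(b) — making sure that touching the face $R=0$ really does force $I(t_0)=0$ rather than merely $R\equiv 0$ in isolation, and correctly reducing it to the already-handled case. One should also confirm that a solution of \eqref{re1} in $Y$ is genuinely $C^1$ and periodic (already noted in the excerpt: solutions in $Y$ lie in $X$), so that the minimum-point calculus arguments are legitimate.
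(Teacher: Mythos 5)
Your proposal is correct and follows essentially the same route as the paper: the identical three-way decomposition of $\partial D$, uniqueness of solutions to force $I\equiv 0$ and hence $S\equiv 1$ on the face $I(t_0)=0$, and sign arguments from the vector field to rule out the faces $S(t_0)=0$ and $S(t_0)+I(t_0)=1$. The only cosmetic differences are that you argue via $S'(t_0)=0$ (resp.\ $R'(t_0)=0$) at a global extremum of a periodic function where the paper argues that a nonzero one-sided derivative would violate the inequalities just before $t_0$, and that you reduce the face $R=0$ to the face $I=0$ rather than assuming $I>0$ there and deriving a direct contradiction; both versions are valid.
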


\begin{proof} Assume that $(S,I)\in \partial D$ is a solution of (\ref{S2}),(\ref{I2}).
Note that $(S,I)\in \partial D$, if an only if
\begin{equation}\label{cl}S(t)\geq 0,\; I(t)\geq 0,\; S(t)+I(t)\leq 1,\end{equation}
and at least one of the following conditions holds:

(i) There exists $t_0\in \Real$ so that $I(t_0)=0$.

(ii) There exists $t_0\in \Real$ so that $S(t_0)=0$.

(iii) There exists $t_0\in \Real$ so that $S(t_0)+I(t_0)=1$.

We now consider each of these three cases:

(1)
Assume (i) holds. Let $\tilde{S}$ be the solution of
$$\tilde{S}'=\mu(1-\tilde{S}),\;\; \tilde{S}(t_0)=S(t_0).$$
and let $\tilde{I}(t)=0$. Then $\tilde{S},\tilde{I}$ is
 a solution of the initial-value problem (\ref{S2}),(\ref{I2}) with initial
condition
$$\tilde{S}(t_0)=S(t_0),\;\tilde{I}(t_0)=0.$$
By uniqueness of the solution for the initial-value problem, we conclude that $S=\tilde{S}$, $I=0$. Thus $S$ satisfies
$S'=\mu(1-S)$,
and since the only periodic solution of this equation is $S=1$, we conclude that $(S,I)=(1,0)$, as we wanted to prove.

(2) Assume now that (ii) holds.
Then from (\ref{S2}) we get $S'(t_0)=\mu$. But this implies that $S(t)<0$ for $t<t_0$ sufficiently close to $t_0$, which
contradicts (\ref{cl}). Thus this case is impossible.

(3) Assume now that (iii) holds. Moreover since we have already proven the result in the case that (i) holds, we may assume that $I(t)>0$ for
all $t$. Adding (\ref{S1}) and (\ref{I1}) we get
$$(S+I)'(t_0)=\mu(1-S(t_0)-I(t_0))-\gamma I(t_0)=-\gamma I(t_0)<0.$$
Therefore we conclude that $S(t)+I(t)>1$ for $t<t_0$ sufficiently close to $t_0$, contradicting (\ref{cl}). Therefore this case is impossible.
\end{proof}

We can now show that $U$, defined by (\ref{U}), satisfies (III).

\begin{lemma}Assume ${\cal{R}}_0>1$. If $\frac{1}{{\cal{R}}_0}<\delta<1$ then, for any $\lambda\in [0,1]$ there are no solutions $(S,I)$ of (\ref{re1}) with
$(S,I)\in\partial U$.
\end{lemma}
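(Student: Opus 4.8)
The plan is to describe $\partial U$ precisely, split it into the part lying on $\partial D$ and the part on which the defining constraint $\min_t S(t)<\delta$ has become an equality, and then rule out solutions of (\ref{re1}) on each part separately: the first part via Lemma~\ref{only}, the second via an integral identity coming from the periodicity of $I$.

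First I would record that $U$ is open in $Y$, being the intersection of the open set $D$ with the set $\{(S,I)\in Y\;|\;\min_t S(t)<\delta\}$, which is open because $(S,I)\mapsto\min_t S(t)$ is continuous on $Y$. Since $\overline{U}\subseteq\overline{D}\cap\{(S,I)\;|\;\min_t S(t)\le\delta\}$, every $(S,I)\in\partial U=\overline{U}\setminus U$ satisfies $\min_t S(t)\le\delta$ together with one of the following: (a) $(S,I)\in\partial D$; or (b) $(S,I)\in D$ and $\min_t S(t)=\delta$.

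In case (a), if such an $(S,I)$ solved (\ref{re1}) then Lemma~\ref{only} would force $(S,I)=(S_0,I_0)=(1,0)$; but then $\min_t S(t)=1>\delta$, contradicting $\min_t S(t)\le\delta$ (equivalently, $(S_0,I_0)\notin\overline{U}$, as already noted after (\ref{U})). So no boundary solution can occur in case (a). In case (b), suppose $(S,I)\in D$ solves (\ref{S1})--(\ref{I1}) for some $\lambda\in[0,1]$ with $\min_t S(t)=\delta$. Since $I(t)>0$ for all $t$, dividing (\ref{I1}) by $I$ and integrating over one period annihilates $\int_0^T (I'/I)\,dt$ and yields $\int_0^T\beta_{\lambda}(t)S(t)\,dt=(\gamma+\mu)T$. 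Now $\beta_{\lambda}(t)=(1-\lambda)\bar{\beta}+\lambda\beta(t)>0$ for every $t$ and every $\lambda\in[0,1]$, while $S(t)\ge\delta$ for all $t$; hence $(\gamma+\mu)T\ge\delta\int_0^T\beta_{\lambda}(t)\,dt=\delta\bar{\beta}T$, i.e.\ $\delta\le(\gamma+\mu)/\bar{\beta}={\cal{R}}_0^{-1}$, contradicting the hypothesis $\delta>{\cal{R}}_0^{-1}$. Thus case (b) is also impossible, and there are no solutions of (\ref{re1}) on $\partial U$.

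The only part requiring genuine care is the first step: getting the description of $\partial U$ right and checking that the trivial solution $(S_0,I_0)$, which is precisely why $D$ itself cannot serve as $U$, has really been pushed out of $\overline{U}$. After that the argument reduces to the one-line integral estimate, whose entire content is the threshold $\delta>{\cal{R}}_0^{-1}$ — exactly the condition preventing the time-average of $S$ along a would-be boundary solution from being large enough to be compatible with $I$ being $T$-periodic.
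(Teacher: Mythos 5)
Your proof is correct and follows essentially the same route as the paper: decompose $\partial U$ into the part on $\partial D$ (excluded via Lemma \ref{only} together with $(S_0,I_0)\notin\overline{U}$) and the part in $D$ with $\min_t S(t)=\delta$ (excluded by dividing (\ref{I2}) by $I$, integrating over a period, and using $\delta\bar{\beta}>\gamma+\mu$). The only cosmetic difference is that you average $\beta_\lambda$ directly via $\beta_\lambda=(1-\lambda)\bar{\beta}+\lambda\beta$, while the paper notes that $\beta_\lambda$ and $\beta$ share the same mean; both give the identical estimate.
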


\begin{proof} Suppose $(S,I)\in\partial U$. Then either $(S,I)\in \partial D$ or $(S,I)\in D$ and
\begin{equation}\label{bc}
\min_{t\in\Real}S(t)=\delta.
\end{equation}
In the first case, Lemma \ref{only} and the fact that $(S_0,I_0)\not\in\partial U$ imply that $(S,I)$ is not a solution of (\ref{re1}). We therefore assume that
 $(S,I)\in D$ and (\ref{bc}) holds, which implies
that
\begin{equation}\label{bc1}
S(t)\geq \delta, \;\;\forall t.
\end{equation}
Assume by way of contradiction that $(S,I)$ solves (\ref{re1}), or equivalently $(S,I)$  solves (\ref{S2}),(\ref{I2}). Using the assumption $(S,I)\in D$, we have that $I$ is everywhere positive, so we can divide (\ref{I2}) by $I$, and
integrate over $[0,T]$, to obtain
\begin{equation}\label{ee}\frac{1}{T}\int_0^T \beta_\lambda(t)S(t)dt=\gamma+\mu.\end{equation}
But from (\ref{bc1})  we get
\begin{equation}\label{vv}\frac{1}{T}\int_0^T \beta_\lambda(t)S(t)dt \geq \delta \frac{1}{T}\int_0^T \beta(t)dt=\delta \bar{\beta}.\end{equation}
By the assumption $\delta>\frac{1}{{\cal{R}}_0}$ we have $\delta \bar{\beta}>\gamma+\mu$, so that (\ref{vv}) implies
\begin{equation*}\label{vv1}\frac{1}{T}\int_0^T \beta_\lambda(t)S(t)dt >\gamma+\mu,\end{equation*}
contradicting (\ref{ee}).
\end{proof}

To apply the Leray-Schauder degree it remains to verify that (I) and (II) hold. Since $S^*=\frac{1}{{\cal{R}}_0}$, the condition  $\delta> \frac{1}{{\cal{R}}_0}$
implies $(S^*,I^*)\in U$, so (I) holds. To prove (II), it suffices to show that the Fr\'echet derivative $DF_{0}(S^*,I^*)$ is invertible.
Since $F$ is a compact perturbation of the identity so that $DF_{0}(S^*,I^*)$ is Fredholm, it suffices to prove that the kernel of $DF_{0}(S^*,I^*)$ is trivial. Indeed, let us assume that $(V,W)\in \ker(DF_{0}(S^*,I^*))$, and
prove that $(V,W)=0$. We have $DF_{0}(S^*,I^*)(V,W)=0$, or, equivalently,
\begin{equation}\label{ker}L(V,W)=DN_0(S^*,I^*)(V,W).\end{equation}
Note that
$$DN_0(S^*,I^*)(V,W)=(-\bar{\beta} (S^*W+I^*V),\bar{\beta}  (S^*W+I^*V)),$$
so that (\ref{ker}) is equivalent to
\begin{equation}\label{sys}\left(
    \begin{array}{c}
      V \\
      W \\
    \end{array}
  \right)'=\left(
             \begin{array}{cc}
               -\mu {\cal{R}}_0 & -(\gamma+\mu) \\
               \mu({\cal{R}}_0-1) & 0 \\
             \end{array}
           \right)\left(
                    \begin{array}{c}
                      V \\
                      W \\
                    \end{array}
                  \right).
\end{equation}
The characteristic polynomial of the above matrix is
$$p(x)=x^2+\mu {\cal{R}}_0 x+ (\gamma+\mu)\mu({\cal{R}}_0-1).$$
Noting that $p(0)>0$ and that, for $\omega\in \Real$, $Im(p(\omega i))=\mu {\cal{R}}_0 \omega$, we see that the matrix has no imaginary or $0$
eigenvalues, so that (\ref{sys}) has no periodic solutions except $(V,W)=(0,0)$, and the claim is proved.

We have thus proven that (I)-(III) hold, which completes the proof of Theorem \ref{main}.

\section{Observing the $T$-periodic solution}
\label{numerical}

As we have noted in the Introduction, the period solution whose existence was proved whenever ${\cal{R}}_0>1$ is observable in numerical simulation of (\ref{S})-(\ref{R})
only for those parameter regimes for which it is stable. Therefore, if we are interested in examining the shape and amplitude of the $T$-periodic solution for values of the parameters for which the system displays subharmonic or chaotic behavior, we need a different computational approach. We now describe a simple
approach that we successfully implemented, which allows us to observe the $T$-periodic solution for arbitrary parameters. We used the Galerkin
method, expanding the periodic solution $S(t),I(t)$ in a Fourier series. We used the Maple system, whose symbolic capabilities make the implementation
particularly easy.
 We take $T=2\pi$,
\begin{equation}\label{for}\beta(t)=\bar{\beta}(1+\lambda \cos(t)), \end{equation}
and search for
approximate periodic solutions of (\ref{S}),(\ref{I}) the form
\begin{eqnarray}\label{fo}\tilde{S}(t)&=&A_S^{0}+\sum_{n=1}^N [A_S^{n}\cos(nt)+B_S^{n}\sin(nt)],\nonumber\\
\tilde{I}(t)&=&A_I^{0}+\sum_{n=1}^N [A_I^{n}\cos(nt)+B_I^{n}\sin(nt)].
\end{eqnarray}
Plugging (\ref{fo}) into (\ref{S})-(\ref{I}), and then taking the Fourier coefficients of both sides of the equations with respect to
$$\{1,\cos(t),\cdots,\cos(Nt),\sin(t),\cdots,\sin(Nt)\},$$
we get $4N+2$ algebraic equations in $4N+2$ variables, which we numerically solve for $A_S^0,\cdots A_S^N,B_S^1,\cdots,B_S^N,A_I^0,\cdots A_I^N,B_I^1,\cdots,B_I^N$, obtaining the approximate $T$-periodic solution (\ref{fo}). We have found that the numerical iteration for solving the algebraic equations, using Maple's
fsolve command, works well, when we start with the initial conditions for the iteration given by the endemic equilibrium of the autonomous case, that
is $A_S^0=S^*,I_S^0=I^*$ (see (\ref{I0})) and $A_S^k=B_S^k=A_I^k=B_I^k=0$ for $1\leq k\leq N$. We check that the functions $\tilde{S}(t),\tilde{I}(t)$ indeed
approximate a periodic solution of (\ref{S}),(\ref{I}) by observing that the highest Fourier coefficients $A_S^N,B_S^N,A_I^N,B_I^N$ are very small, and
by plugging $\tilde{S}(t),\tilde{I}(t)$ into (\ref{S}),(\ref{I}) and checking that the residual is small. We note that theoretical justification
of the Galerkin method for approximating periodic solutions can be found, e.g., in \cite{bobylev}.

We now present some examples of results obtained by the method described above.
With the period $2\pi$ of the forcing representing one year, we took take $\gamma$ corresponding to a $2$-week infectious period, $\bar{\beta}=20\gamma$, $\mu$ corresponding to $\%4$ population growth rate
per year, giving ${\cal{R}}_0=19.97$. These parameters are approximately those estimated for measles. We consider different values of the
strength of seasonality $\lambda$ (see (\ref{for})). In figure 1 we plot, for different value of $\lambda$, the periodic solution found by the Galerkin
method (with $N=8$), together with a solution of (\ref{S})-(\ref{I}) obtained by direct simulation, starting the plot at $t=3000\pi$ to ensure that transients have decayed.

\begin{figure}\label{fig1}
\centering
   \includegraphics[height=4.5cm,width=4.5cm, angle=0]{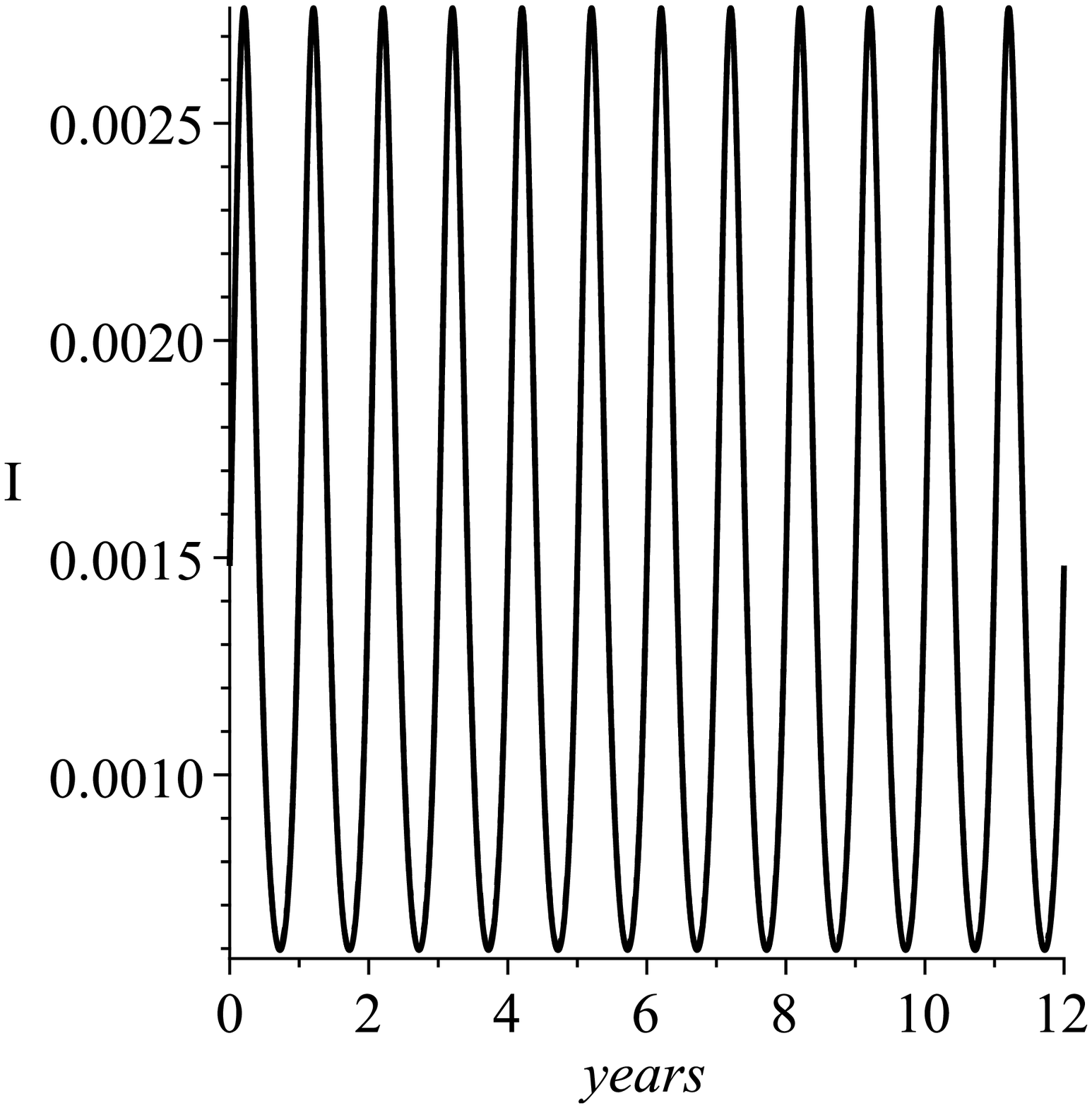}\includegraphics[height=4.5cm,width=4.5cm, angle=0]{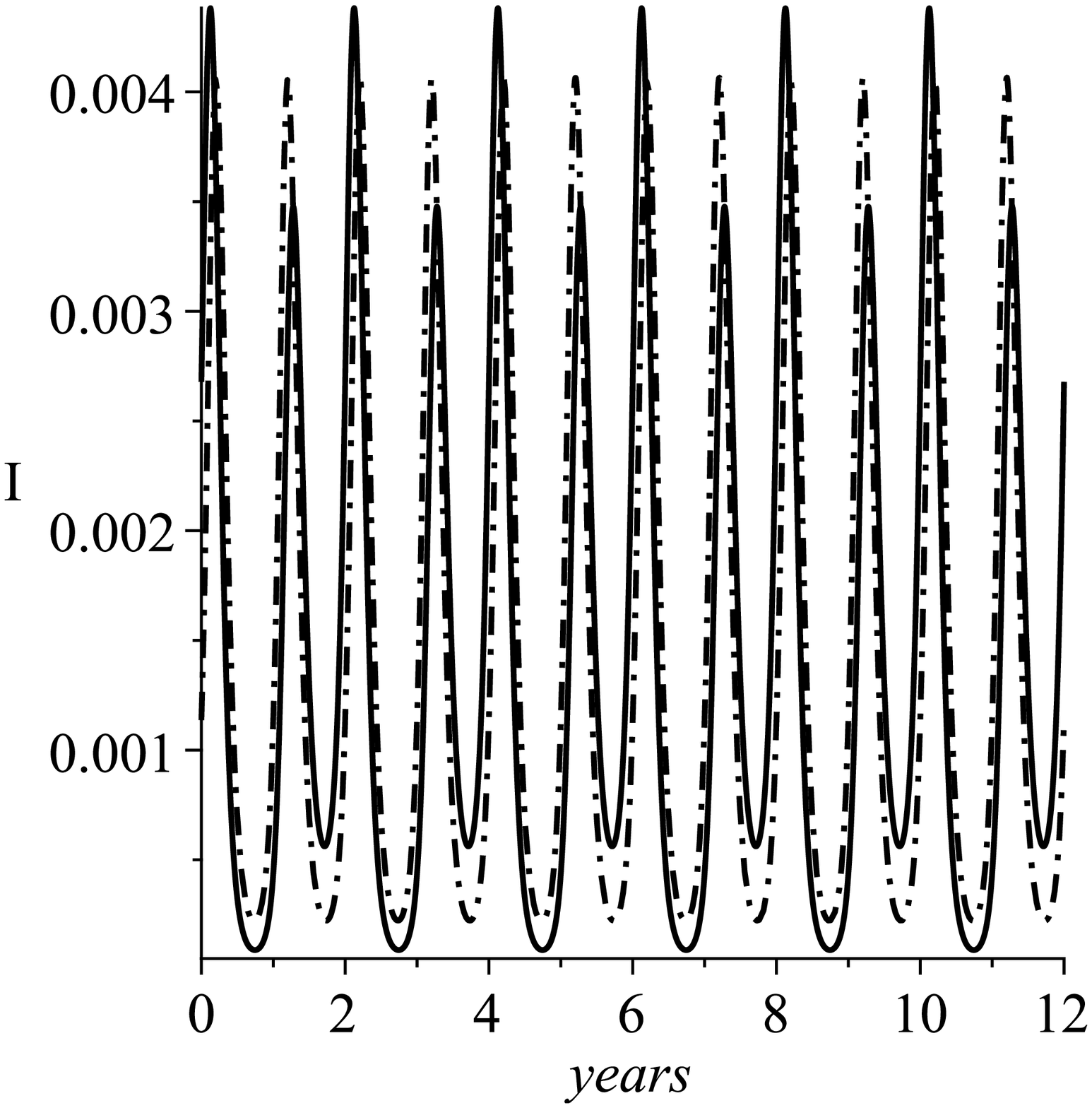}\includegraphics[height=4.5cm,width=4.5cm, angle=0]{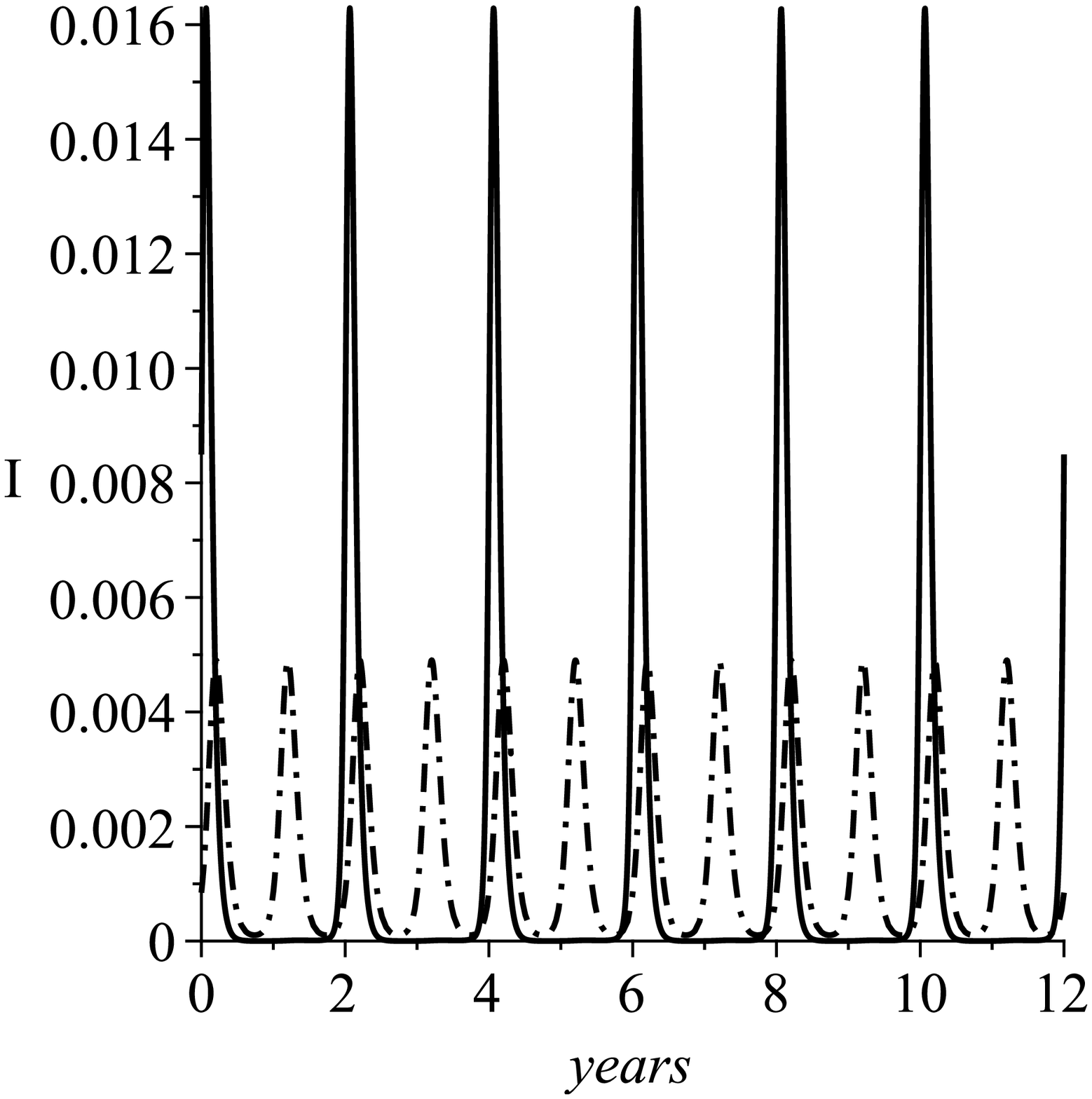}\\
   \includegraphics[height=4.5cm,width=4.5cm, angle=0]{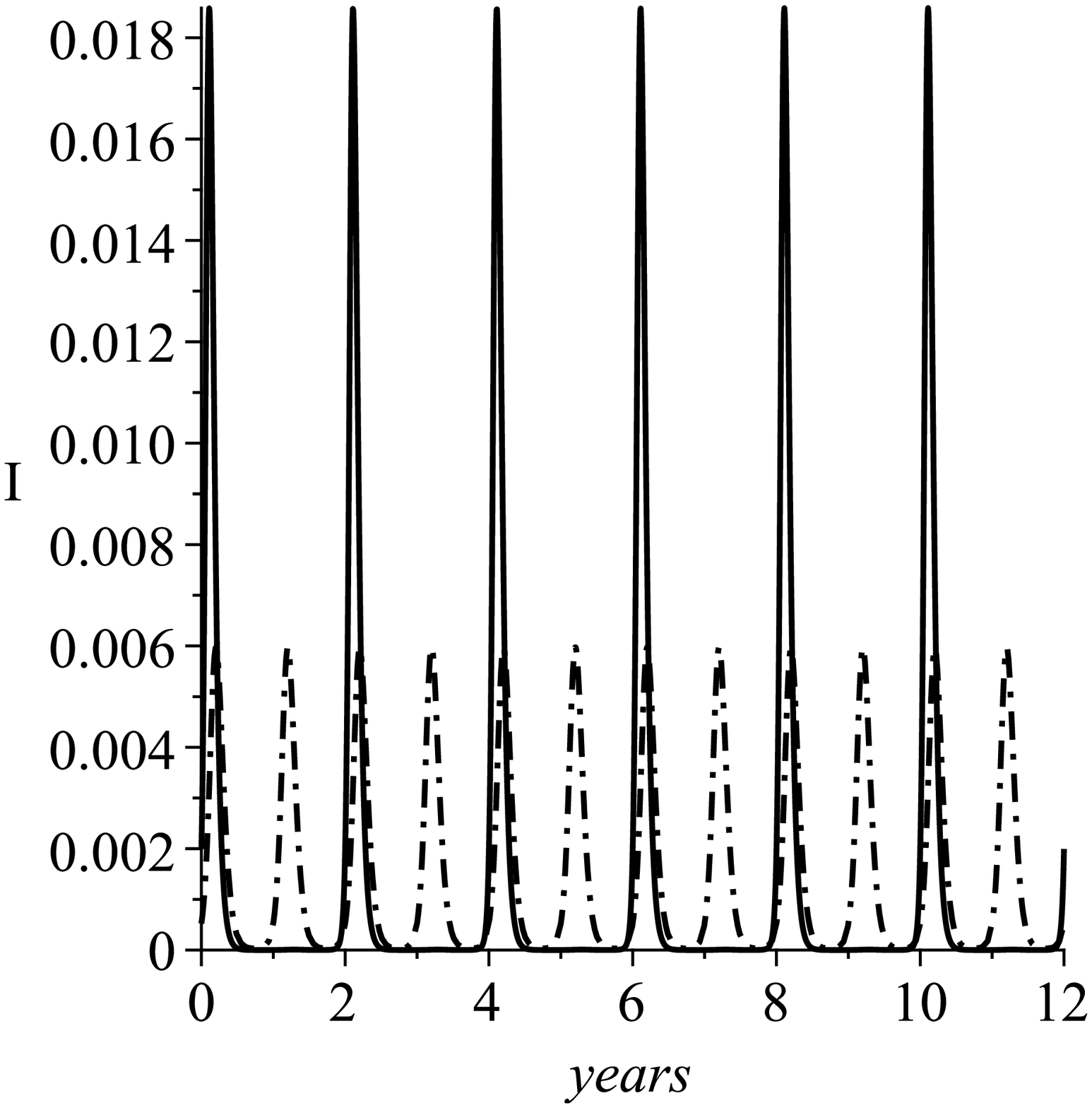}\includegraphics[height=4.5cm,width=4.5cm, angle=0]{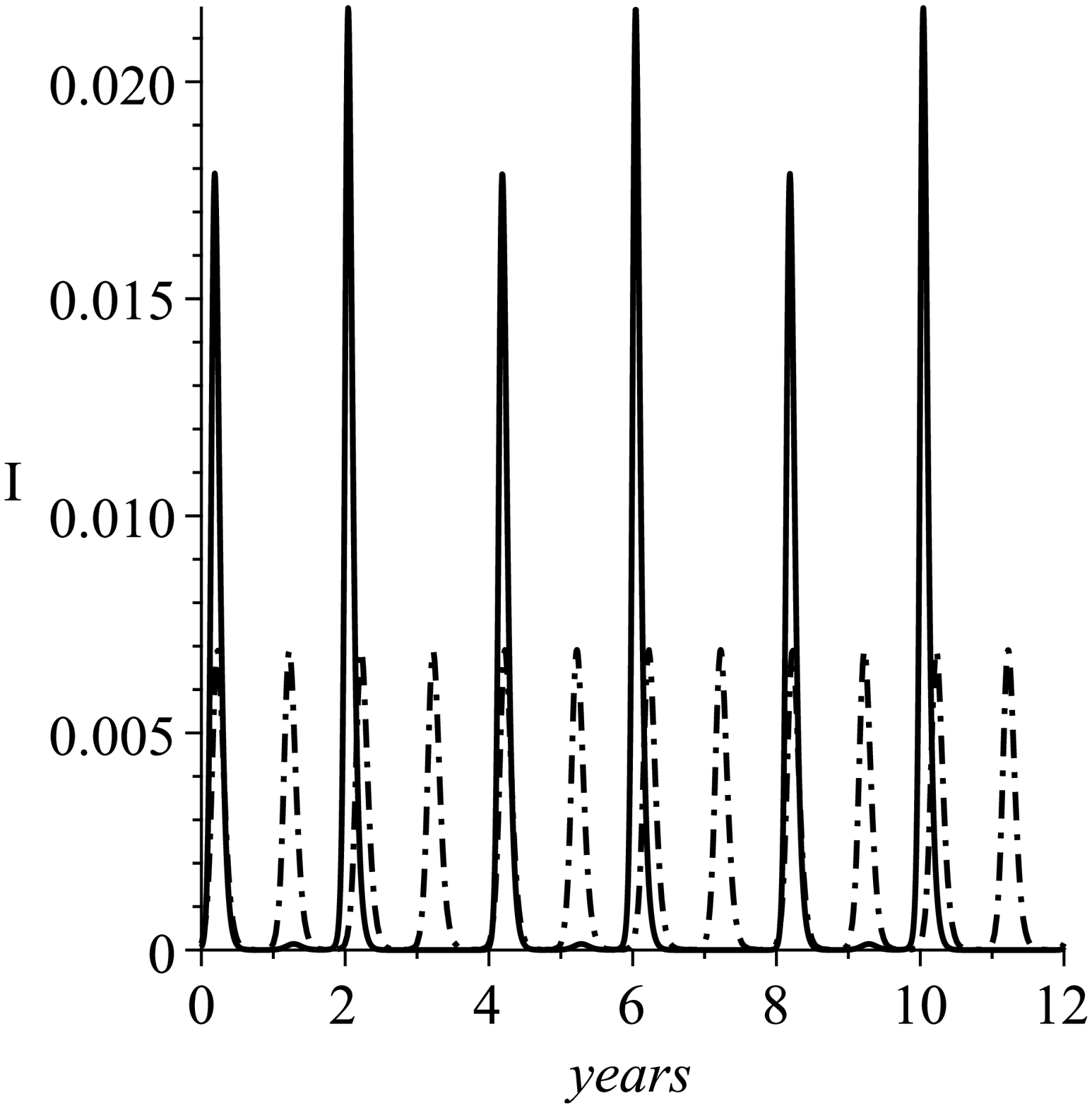}\includegraphics[height=4.5cm,width=4.5cm, angle=0]{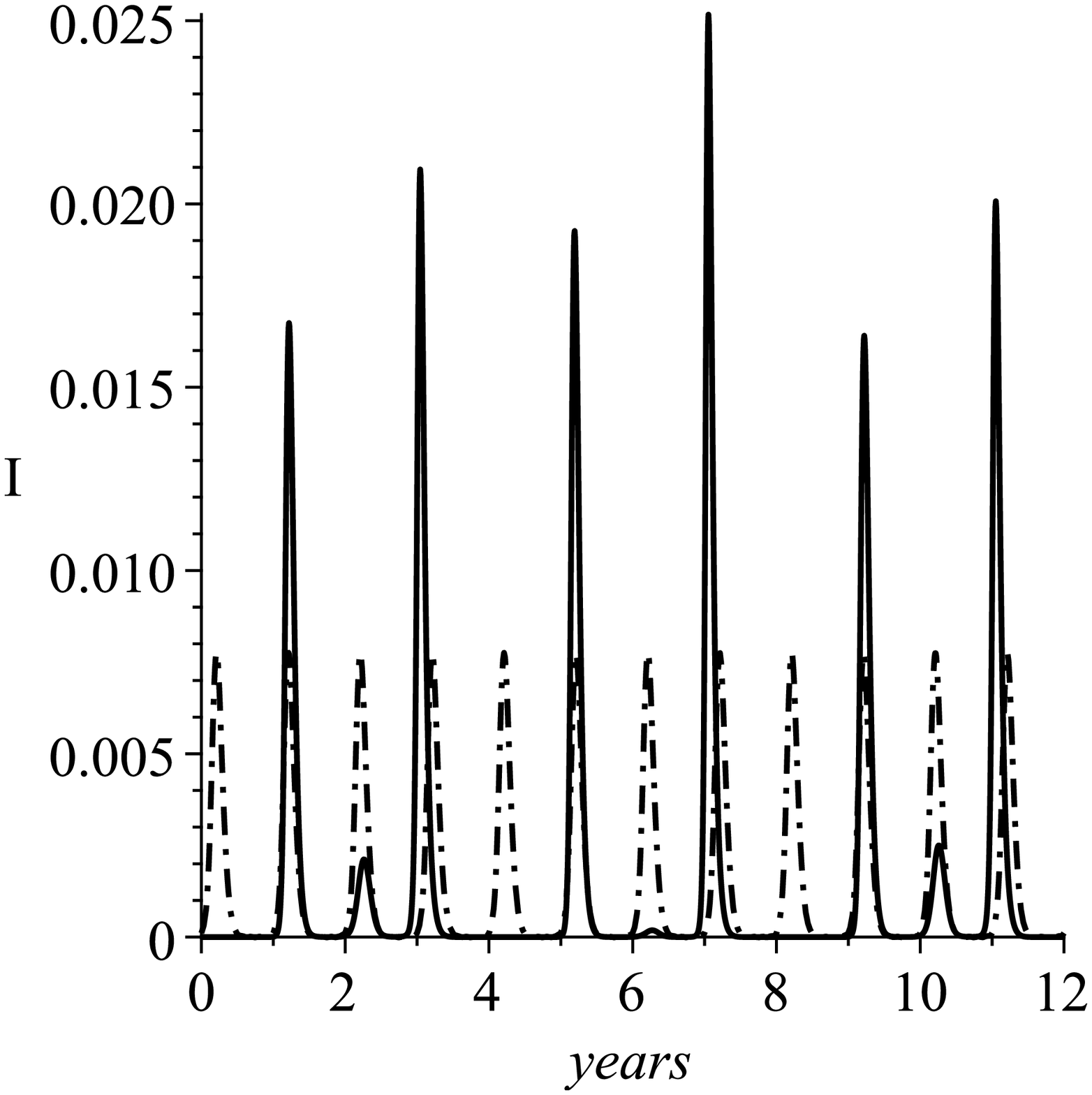}

    \caption{Solutions of the SIR model with $T$-periodic forcing ($T=2\pi$), obtained by direct simulation, and the $T$-periodic solution obtained by the Galerkin method (dashed line), for varying strength of seasonality $\lambda$. Top row, from left to right: $\lambda=0.1,0.21,0.3$, Bottom row: $\lambda=0.45,0.6,0.7$. Other parameters: $\gamma=14\frac{2\pi}{365}$,  $\bar{\beta}=20\gamma$, $\mu=\frac{0.04}{2\pi}$.}
\end{figure}

\begin{figure}\label{fig2}
\centering
   \includegraphics[height=5cm,width=10cm, angle=0]{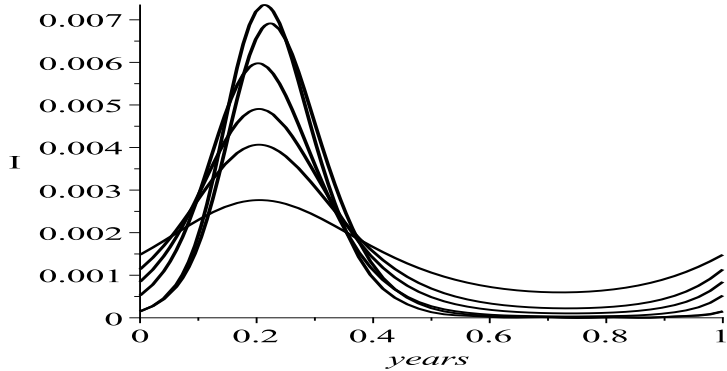}
\caption{The $T$-periodic solution obtained by the Galerkin method, for varying strength of seasonality $\lambda=0.1,0.21,0.3,0.45,0.6,0.7$.}
\end{figure}

When $\lambda=0.1$, the system behavior is $2\pi$-periodic, so the solution of the simulated system coincides with the $2\pi$-periodic solution found by
the Galerkin method. At $\lambda=0.21$, the $2\pi$-periodic solution has lost stability, and we see bifurcation to a subharmonic of order 2 (period $4\pi$), which is still quite close to the $2\pi$-periodic solution, with larger and smaller epidemics alternating. At $\lambda=0.3,0.45$ the $4\pi$-periodic subharmonic solution is already quite different, with a large epidemic every two years. At $\lambda=0.6$ we observe that the system has a subharmonic of order 4 (period $8\pi$), while at
$\lambda=0.7$ we observe chaotic behavior. The $2\pi$-periodic solution (which is unstable
except for the case $\lambda=0.1$) increases in amplitude and becomes less sinusoidal as $\lambda$ increases (note the differences in scales in the
different plots). In figure 2 we plot the $2\pi$-periodic solutions for all values of $\lambda$, for a better view.

\section{Discussion}
\label{discussion}

The forced SIR model is a beautiful example of a simple nonlinear dynamical system which displays complicated behaviors which are difficult to
understand in intuitive terms. Moreover, these complicated behaviors are relevant to explaining the epidemiology of infectious diseases in humans,
as studies comparing the behavior of the SIR and variants of it to surveillance data have shown \cite{bolker,earn,keeling1}. We have proven the fundamental result
that a $T$-periodic solution exists for the $T$-periodically forced SIR model whenever ${\cal{R}}_0>1$. As we have stressed, this
does not mean that the dynamics of the model is periodic, since the periodic solution whose existence is proved need not be stable, although one can use standard
 perturbation theory to prove that the $T$-periodic solution {\it{is}} stable provided the seasonality parameter $\lambda$ in (\ref{deco}) is sufficiently small.
Numerical simulations show that complex dynamics - subharmonic and chaotic behavior - is very common in the forced SIR model. It is interesting to ask
to what extent the complex dynamics of the forced SIR model can be rigorously understood, beyond numerical simulations. While we do not expect
to be able to precisely characterize the dynamics of the model for different parameter values, it is of great interest even to be able to rigourously
prove that complicated dynamics occurs for at least {\it{some}} parameter values. In this context we mention the work
of H.L. Smith \cite{smith1,smith2}, who proved that the forced SIR model can have multiple stable subharmonic oscillations in certain parameter
ranges. Chaotic behavior has been rigorously established by Glendinning \& Perry \cite{glendinning} for a variant of the forced SIR model, in which
the dependence of the incidence term on $I$ is nonlinear. For the standard SIR model (\ref{S})-(\ref{R}), we are not aware of a proof of chaotic behavior.
Classifying and explaining the dynamical patterns observed in simulations of the forced SIR model is still very challenging, so that, like other
well-known `simple' models such as the forced pendulum equation, the forced SIR model can serve as a stimulus and as a benchmark problem for new developments in
nonlinear analysis.

{\bf{Funding:}} The author acknowledges support of EU-FP7 grant Epiwork.

\end{document}